\theoremstyle{plain}
\newtheorem{thm}{Theorem} % reset theorem numbering for each chapter
\theoremstyle{definition}
\newtheorem{remark}{Remark}
\newcommand{\sn}{\smallskip\noindent}
\title{Blind Polynomial Regression}
\name{Alberto Natali and Geert Leus\thanks{This work is part of the GraSPA project (project 19497 within the TTW OTP programme), which is financed by the Netherlands Organization for Scientific Research (NWO). E-mails: \{a.natali;  g.j.t.leus\}@tudelft.nl; }}
\address{Faculty of Electrical Engineering, Mathematics and Computer Science\\ Delft University of Technology, Delft, The Netherlands}
\begin{document}
\ninept
\setlength{\abovedisplayskip}{5pt}
\setlength{\belowdisplayskip}{5pt}
\maketitle
\begin{abstract}
Fitting a polynomial to observed data is an ubiquitous task in many signal processing and machine learning tasks, such as interpolation and prediction. In that context, input and output pairs are available and the goal is to find the coefficients of the polynomial. However, in many applications, the input may be partially known or not known at all, rendering conventional regression approaches not applicable. In this paper, we formally state the (potentially partial) blind regression problem, illustrate some of its theoretical properties, and propose algorithmic approaches to solve it. As a case-study, we apply our methods to a jitter-correction problem and corroborate its performance. 
\end{abstract}
\begin{keywords}
 polynomial regression, interpolation, Vandermonde, matrix factorization, MUSIC
\end{keywords}
\vspace{-.2cm}
\section{Introduction}

\vspace{-.3cm}

    Parameter estimation is a classical task in signal processing~\cite{kay1993fundamentals}, where the objective is to estimate the parameters of a \textit{hypothesized} model that best explains the relationship between \textit{observed} input-output data pairs. When the input is not directly observable, standard estimation approaches such as  the ordinary least squares, cannot be directly applied, and new estimation techniques are required. 
    
    Motivated by a recent problem arising in graph signal processing~\cite{shuman2013emerging} in the context of graph learning~\cite{natali2022dual}, in this work we consider a polynomial regression model, i.e., where the output variable is expressed as the linear combination of powers of the input variable, with possible additive noise. As we will see, even though the problem can be approached through the lens of  matrix factorization~\cite{fu2019nonnegative} with a Vandermonde factor, a rather natural interpretation is given by that of \textit{sampling multiple unknown polynomials at the same unknown locations}. Specifically, we assume to observe a certain number of sampled polynomials, whose coefficients and sample locations are unknown or noisy, and the goal is to jointly recover them from the available observations.

With reference to prior work,   one of the earliest works in the context of sampling at unknown locations is that of~\cite{marziliano2000reconstruction}, which considers discrete time bandlimited signals where only a subset of its values are known in correspondence to  an ordered unknown subset of the index set. The resulting combinatorial optimization problem is then approached with an exhaustive search and two heuristics methods. The work in~\cite{browning2007approximating} considers the continuous case, and proposes an alternating least squares method that converges to a local minimum. More similar to our work is that of~\cite{elhami2018sampling}, which considers sampled polynomials (and bandlimited functions), and gives the conditions for the unique identifiability of the sampled polynomial. Specifically, it is shown that unique identifiability is achieved by constraining the sampling locations to be described by rational functions.

 Our work can be considered complementary to that of~\cite{elhami2018sampling} with some distinctions;  first, we do not constrain the sample locations, and assume to observe different sampled polynomials all sampled at the same \textit{unknown} locations. This  implicitly restricts the feasible sample locations. Then, we study which are the model ambiguities, representing an equivalence class and show that for some problems, namely the one in~\cite{natali2022dual}, every element of this class (which is a solution for the problem) is a good solution. Finally, we propose two algorithmic methods to solve it: the first based on exhaustive search, the second based on a subspace approach. Numerical simulations in the context of sampling-jitter corroborate our theory.

\section{Problem Definition}
\vspace{-3mm}
\label{sec:preliminaries}
Consider a simple polynomial regression model~\cite{hastie2009elements} of the form:
\begin{align}
\label{eq:polynomial}
    y&= w_0 + w_1x+\ldots+w_{K-1}x^{K-1} + e \\
    &= \bbw^\top \bbv(x) +e,
\end{align}
where $y$ is the independent variable (observation), $x$ is the dependent variable (\textit{regressor}), $\bbw:= [w_0, \ldots, w_{K-1}]^\top$ are the  coefficients associated to the different polynomial degrees and $\bbv(x):= [1, x, \ldots, x^{K-1}]^\top$ is the Vandermonde vector with parameter $x \in \reals$; $e \sim \ccalN(0, \sigma^2)$ is a noise term.

\noindent
Although polynomial regression fits a non linear model to the (available) data, from an estimation point of view it is linear, meaning that for a fixed input variable $x$, the dependence of the output variable $y$ on the weights in $\bbw$ is linear. This  \textit{data-availability}  scenario is often encountered in signal processing in the context (for instance) of signal interpolation and prediction, where input-output pairs $\{(x_i, y_i)\}_{i=0}^{N-1}$ are available by sampling an unknown function in $N$ distinct points, and the goal is to estimate the coefficient vector $\bbw$ which, for a specified $K$, best represents the unknown polynomial relationship between $x$ and $y$. 

\sn
Expressing~\eqref{eq:polynomial} for all the input-output pairs leads to the matrix-vector form\footnote{We will neglect the error term in the rest of the exposition.}:
\vspace{-.2cm}
\begin{align}
\label{eq:polynomial-regression}
    \bby= \bbV(\bbx)\bbw
\end{align}
\sloppy where $\bbx:=[x_0, \ldots, x_{N-1}]^\top$ is the vector of inputs, $\bby=[y_0, \ldots , y_{N-1}]^\top$ is the vector of outputs, and where $\bbV(\bbx)$ is the $N \times K$ Vandermonde matrix, defined as:
\begin{align}
\label{eq:vandermonde}
    \bbV(\bbx):= \begin{bmatrix}
    1 & x_0 & \cdots & x_0^{K-1}\\
    1 &  x_1 & \cdots & x_1^{K-1}\\
    \vdots &  \vdots & \cdots & \vdots\\
    1 &  x_{N-1} & \cdots & x_{N-1}^{K-1}
    \end{bmatrix}
\end{align}
The unique  solution of~\eqref{eq:polynomial-regression}, guaranteed when all $x_i$'s are different and $N>K$, can be found as $\hat{\bbw}=(\bbV\bbV)^{-1}\bbV^\top \bby $. As a short-hand notation, we will use $\bbV$ to refer to the Vandermonde matrix with parameter $\bbx$ in~\eqref{eq:vandermonde}; when convenient for clarity of exposition, we will explicitly write $\bbV(\bbx)$.

\sn
However, it is often the case that variable $x$ is not perfectly known or not known at all. For instance, in analog-to-discrete conversion of signals, undesired jitter on the clock signal leads to a clock drift with respect to the reference clock and a re-synchronization scheme is necessary. 
For this reason, in this work we consider the scenario in which  not only the linear weights $\bbw$ are unknown, but also the vector  $\bbx$. In order to overcome unique identifiability and solvability issues, we assume to obtain $L\geq K$ observation vectors $\{\bby_i\}_{i=0}^{L-1}$ of~\eqref{eq:polynomial-regression} associated to different \textit{unknown} coefficient vectors $\{\bbw_i\}_{i=0}^{L-1}$ yet having the same \textit{unknown} vector $\bbx$, that is,:
\begin{align}
\label{eq:matrix-factorization}
    \bbY= \bbV(\bbx)\bbW 
\end{align}
where $\bbY:=[\bby_0, \ldots, \bby_{L-1}]$  and $\bbW:=[\bbw_0, \ldots, \bbw_{L-1}]$. Such a model can be motivated by the fact that the $L$ observation channels are controlled by the same clock and hence have the same potential clock jitter. Based on~\eqref{eq:matrix-factorization}, the problem statement can be formalized as follows: 

\sn
\textbf{Problem Statement.} \textit{Given the matrix $\bbY \in \reals^{N \times L}$, recover the input vector $\bbx \in \reals^N$ and the coefficient matrix $\bbW \in \reals^{K \times L}$ such that~\eqref{eq:matrix-factorization} holds.}

\sn
%By incorporating prior knowledge on the variables $\bbx$ and $\bbW$, the ill-posedness of the problem can be reduced and the solution space constrained.
Although the problem can be approached from a pure algebraic point of view as a structured matrix factorization,  a pleasing geometrical interpretation of~\eqref{eq:matrix-factorization} is given in Fig.~\ref{fig:idea}. Each vector $\bby_0, \ldots, \bby_{L-1}$ can be interpreted as function values obtained by sampling $L$ distinct polynomials $y_0(x), \ldots, y_{L-1}(x)$, all with  degree $K-1$, in the same $N$ unknown locations $x_0, \ldots, x_{N-1}$. The goal is to recover the original locations (and polynomial coefficients) from the available sampled function values (filled points on the right of the figure).

\begin{figure}
    \centering
   \includegraphics[width=\columnwidth, keepaspectratio=true,  trim=1.8cm 0 1.9cm 0.4cm, clip=true]{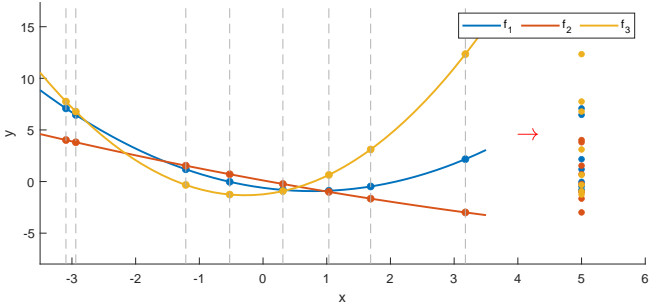}
    \vspace{-.7cm}
    \caption{Illustration of the problem: a number of functions ($L=3$) are sampled (filled points) in $N=8$ locations (gray dotted line) The problem is to recover the sampling locations and the polynomial coefficients with the only knowledge of the function values (shown in the right in the figure). }
    \label{fig:idea}
\end{figure}

\vspace{-.2cm}
\begin{remark}
\label{remark:doa}
At  first glance, model~\eqref{eq:matrix-factorization} shares similarities with fundamental array signal processing models encountered for instance in direction of arrival (DoA) estimation~\cite{chung2014doa}. However, the geometry of our problem is different (transposed) and complicates its analysis in that  ready-to-use algorithms, such as MUSIC~\cite{schmidt1986multiple}, are not directly applicable; see also Remark~\ref{remark:music}.
\end{remark}

%: in DoA every column of matrix $\bbV$ (containing the steering vectors of the array manifold) is a Vandermonde vector which depends exclusively on one scalar parameter;  in our case the Vandermonde vectors of scalar parameters are in the rows of matrix $\bbV$.

\section{Proposed Problem Solutions}
\vspace{-.3cm}

  Before delving into the algorithmic framework, we demand ourselves which are the solution pairs $(\bbx,\bbW)$ that can be considered a ``valid solution'' for problem~\eqref{eq:matrix-factorization}. Indeed notice how~\eqref{eq:matrix-factorization} is not free of model ambiguities, and different pairs $(\bbx,\bbW)$ may lead to the same observations matrix $\bbY$. Due to its theoretical relevance (with practical consequences), this will be the subject of study of this section. Subsequently, we will propose two different approaches to solve the problem: the first based on a selection sampling scheme, the second based on subspace fitting.

\sn
%%%%%%%%%%%%%%%%%%%%%%%%%%%%%%%%%%%%%%%%
%
%       AMBIGUITIES
%
%%%%%%%%%%%%%%%%%%%%%%%%%%%%%%%%%%%%%%%%%%
\textbf{Ambiguities.} 
\label{sec:ambiguities}
 If $\bbV$ and $\bbW$ are the true matrix factors satisfying~\eqref{eq:matrix-factorization}, then for any $K \times K$ invertible matrix $\bbT$, it holds:
\begin{align}
\label{eq:model-ambiguity}
    \bbY= \bbV\bbT\bbT^{-1}\bbW= \bbV^\prime \bbW^\prime.
\end{align}
However, $\bbV^\prime$ needs to be Vandermonde in order for~\eqref{eq:model-ambiguity} to be consistent with the model structure in~\eqref{eq:matrix-factorization}. For this reason, we now investigate \textit{which class of matrices maps a Vandermonde matrix to another Vandermonde matrix}, and is thus responsible for the ambiguities in~\eqref{eq:matrix-factorization}. 

\sn
To study the possible ambiguities of the system $\bbV_1\bbT=\bbV_2$, with $\bbV_1$ and $\bbV_2$ Vandermonde matrices, means to study which linear transformation $\bbT$ transforms a Vandermonde vector with parameter $x \in \reals$ to another Vandermonde vector with parameter $y$ (independent of the specific parameter instantiation), i.e.,:
\begin{align}
\label{eq:vandermonde-to-vandermonde}
   \underbrace{[1, y, y^2, \ldots, y^{K-1}]}_{\bbv(y)^\top}= \underbrace{[1, x, x^2, \ldots, x^{K-1}]}_{\bbv(x)^\top}\bbT
\end{align}
for some $y \in \reals$.
 We show how matrix $\bbT$ needs to be an upper (generalized) Pascal matrix~\cite{pascal} with parameters $(t_0,t_1)$, denoted as $\bbT_{(t_0,t_1)}$; that is, a matrix of the form\footnote{We show $K=4$ to ease the visualization.}:
 \begin{align}
     \bbT_{(t_0,t_1)}:= \begin{bmatrix}
        1 & t_0 & t_0^2 & t_0^3 \\
        0 & t_1 & 2t_0t_1 & 3t_0^2t_1\\
        0 & 0 & t_1^2 & 3t_0t_1^2  \\
        0 & 0 & 0  & t_1^3\\
    \end{bmatrix} 
 \end{align}
\begin{thm}
\label{eq:theorem}
 The equality $\bbv(y)^\top= \bbv(x)^\top\bbT$ [cf.~\eqref{eq:vandermonde-to-vandermonde}] holds for generic $x, y$ if and only if $\bbT \in \reals^{K \times K}$ is a Pascal matrix with parameters $(t_0,t_1)$.
\end{thm}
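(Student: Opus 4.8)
The plan is to reduce the matrix identity in~\eqref{eq:vandermonde-to-vandermonde} to a family of scalar polynomial identities and then pin down $\bbT$ by a degree count. First I would read off the entries columnwise: writing $\bbT=[T_{ij}]$ with $0\le i,j\le K-1$, the $j$-th entry of $\bbv(x)^\top\bbT$ is the polynomial $p_j(x):=\sum_{i=0}^{K-1}T_{ij}\,x^i$, which has degree at most $K-1$ because $\bbv(x)$ carries powers only up to $x^{K-1}$. The hypothesis is that for generic $x$ there is a value $y=y(x)$ with $p_j(x)=y^j$ for every $j$.

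For the \emph{if} direction I would simply invoke the binomial theorem. For $\bbT=\bbT_{(t_0,t_1)}$ the entries are $T_{ij}=\binom{j}{i}t_0^{\,j-i}t_1^{\,i}$, so that
\begin{align*}
  p_j(x)=\sum_{i=0}^{j}\binom{j}{i}(t_1x)^i\,t_0^{\,j-i}=(t_0+t_1x)^j .
\end{align*}
Hence $\bbv(x)^\top\bbT=\bbv(t_0+t_1x)^\top$, and the equality holds for every $x$ with $y=t_0+t_1x$.

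For the \emph{only if} direction, which is the heart of the argument, I would exploit the scalar equations. The component $j=1$ gives $y=p_1(x)$, a polynomial in $x$; substituting this back, the remaining components force $p_j(x)=p_1(x)^{\,j}$ for all $j$ (and $p_0\equiv 1$). These hold at infinitely many $x$, hence as identities in $\reals[x]$. The key step is then a degree count: since each $p_j$ has degree at most $K-1$, applying the identity with $j=K-1$ gives $(K-1)\deg p_1=\deg p_{K-1}\le K-1$, whence $\deg p_1\le 1$ for $K\ge 2$ (the case $K=1$ is trivial). Writing $p_1(x)=t_0+t_1x$ and expanding $p_j(x)=(t_0+t_1x)^j$ by the binomial theorem recovers $T_{ij}=\binom{j}{i}t_0^{\,j-i}t_1^{\,i}$, i.e.\ $\bbT=\bbT_{(t_0,t_1)}$.

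The main obstacle is conceptual rather than computational: interpreting ``holds for generic $x,y$'' correctly. Here $y$ is not free but equals the image $y(x)$, and the deductions $p_j=p_1^{\,j}$ are genuine polynomial identities justified by holding at infinitely many points, not pointwise coincidences. Once this is in place, the degree bound confines $p_1$ to be affine and the binomial expansion does the rest. I would also verify the degenerate branch $t_1=0$ (constant $y=t_0$, including $t_0=t_1=0$), which is still consistent with $\bbT_{(t_0,0)}$ and so does not escape the stated characterization.
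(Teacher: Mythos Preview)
Your proof is correct and follows essentially the same approach as the paper: both directions hinge on the binomial theorem for the ``if'' part, and on the degree constraint forced by the last column $y^{K-1}=p_1(x)^{K-1}$ for the ``only if'' part. The only cosmetic difference is that the paper eliminates the coefficients $t_{K-1},t_{K-2},\ldots,t_2$ of $p_1$ one at a time by chasing the leading monomial of $y^{K-1}$, whereas you collapse this into the single inequality $(K-1)\deg p_1\le K-1$; your packaging is slightly cleaner but the underlying idea is identical.
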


 \begin{proof}
$\Longleftarrow )$ If $\bbT_{(t_0,t_1)}$ is Pascal with parameters $t_0$ and $t_1$, then it directly follows that $\bbv(x)^\top\bbT_{(t_0,t_1)}=\bbv(t_0+t_1x)=\bbv(y)^\top$.

\sn
$\Longrightarrow )$
Consider $\bbT(:,1)= [t_0, t_1, \ldots, t_{K-1}]^\top$ so that $y= t_0 + t_1x+ t_2x^2 + \ldots + t_{K-1}x^{K-1}$, which models the second column of $\bbv(y)^\top$ as a polynomial of order $K-1$ with coefficients given by the second column of $\bbT$. %We use a backward approach to show that in order for~\eqref{eq:vandermonde-to-vandermonde} to hold, it needs to hold $t_j=0, \forall j>1$ are zeros.
The last element of $\bbv(y)^\top$ is then given by: 
\begin{align}
    y^{K-1}= (t_0 + t_1x+ t_2x^2 + \ldots + t_{K-1}x^{K-1})^{K-1}
\end{align}
whose highest order term of is $t_{K-1}^{K-1}x^{(K-1)^2}$. Since  $x^{(K-1)^2}$ is not present in $\bbv(x)^\top$,  $t_{K-1}$ necessarily needs to be equal to zero. Next, all the coefficients associated to the monomial $x^j$, with $j>(K-2)(K-1)$, include $t_{K-1}$, which is thus also equal to zero. The next monomial which includes terms with a coefficient different from zero is $x^{(K-2)(K-1)}$, for which the only non-zero term is $t_{K-2}$. However, since such monomial is not present in $\bbv(x)^\top$, we conclude $t_{K-2}=0$. With the same logic one can show that only $t_0$ and $t_1$, associated to the zeroth- and first-order monomial, can be different from zero. In other words, the entry $(i,j)$ of $\bbT$ represents the coefficient associated to $x^i$ when the binomial $y(x)=t_0+t_1x$ is raised to the power $j$, with $i,j$ starting from zero. 
\end{proof}

%The intuition behind why only such a structured matrix can achieve the Vandermonde-to-Vandermonde transformation is informally motivated by stating that \textit{raising a polynomial of order $K$ to the power $K$ remains still a polynomial of order $K$ if and only if the original polynomial is linear, i.e., $K-1=1$.} 

\noindent
\textbf{Identifiability and solvability.}
One of the consequences of Theorem~\ref{eq:theorem} is that, without additional constraints, every shifted and scaled version of the groundtruth parameter $x$ (for an appropriate $\bbW$), perfectly fits the observation model~[cf.~\eqref{eq:model-ambiguity}]. Thus the model is identifiable up to this ambiguity, which is sufficient in many contexts. For instance, in the context of (eigenvalues) graph learning~\cite{natali2022dual}, a shift and scale of the graph eigenvalues maintains the same topological structure of the original graph (removing the self loops caused by the shift). As we will see, identifiability alone (which is mainly a theoretical property) does not provide us a  guarantee of solvability (meaning, global convergence).

%%%%%%%%%%%%%%%%%%%%%%%%%%%%%%%%%%%%%%%%
%
%       Selection Sampling Approach
%
%%%%%%%%%%%%%%%%%%%%%%%%%%%%%%%%%%%%%%%%%%
\vspace{-.2cm}
\subsection{Selection Sampling}
When domain knowledge for the problem at hand allows to constrain the range of the input $x$, a viable option is to perform a selection sampling strategy. The idea is to discretize the admissible domain in $G$ points, with $G>N$,  according to a specified sampling pattern. This leads to a vector $\overline{\bbx}$ of \textit{candidate} solutions and the problem becomes now to select which $N$ out of the $G$ entries of $\overline{\bbx}$ best fits (together with an appropriate coefficient matrix $\bbW$) matrix $\bbY$. This is done by introducing the $N \times G$ selection matrix $\bbPhi$, whose rows are indicator vectors, i.e., $\bbPhi(i,j)=1$ if $\overline{x}_j$ is selected to represent the true $x_i$. 

\sn 
A viable optimization problem  for the fitting problem is then:
\begin{align}
\label{eq:selection}
        \min_{\bbPhi, \bbW} \|\bbY - \bbPhi \bbV(\overline{\bbx}) \bbW\|_F^2
\end{align}
where $\overline{\bbV}:=\bbV(\overline{\bbx})$ is the ``augmented'' $G \times K$ Vandermonde matrix containing the discretized domain. By substituting the pseudoinverse solution $\hat{\bbW}=  ((\bbPhi  \overline{\bbV})^\top (\bbPhi  \overline{\bbV}(\bbPhi  \overline{\bbV})^\top) \bbY = (\bbPhi \overline{\bbV})^\dagger \bbY$ into~\eqref{eq:selection} we have:
\begin{align}
    \min_{\bbPhi} \|(\bbI - \bbPhi \overline{\bbV} (\bbPhi \overline{\bbV})^\dagger) \bbY \|_F^2.
\end{align}
Because $\bbPhi$ is completely determined by an $N$-sparse binary vector $\boldsymbol{\phi} \in \reals^G$ an exhaustive search can be performed to find the optimal selection pattern, which consists in trying all the possible $ \binom{G}{N} = \frac{G!}{N!(G-N)!}$ combinations. Alternatively, to further reduce the search space, a branch and bound technique can be employed. However, both these approaches are mainly feasible for low values of $N$ and $G$, due to the combinatorial nature of the problem.

%%%%%%%%%%%%%%%%%%%%%%%%%%%%%%%%%%%%%%%%
%
%       SUBSPACE APPROACH
%
%%%%%%%%%%%%%%%%%%%%%%%%%%%%%%%%%%%%%%%%%%

\subsection{Subspace Fitting}
\vspace{-.3cm}
In a subspace fitting approach~\cite{viberg1991sensor}, we start by considering the following optimization problem to estimate $\bbx$ and $\bbW$:
\begin{align}
\label{eq:original}
    \min_{\bbx, \bbW} \;\frac{1}{2} \|\bbY - \bbV(\bbx)\bbW\|_F^2,
\end{align}
which can be solved, for instance, with an alternating minimization approach, without guarantee of convergence to a global optimum.

\sn
To reduce the estimation effort, especially when $L \gg K$, consider the economy-size SVD of matrix $\bbY$, i.e.,  $\bbY=\bbU\bbSigma\bbZ^\top$, where $\bbU \in \reals^{N \times K}$ and $\bbZ \in \reals^{L \times K}$  are the left and right singular vectors, respectively, and $\bbSigma \in \reals^{K \times K}$ is the diagonal matrix of singular values. Since both $\bbV(\bbx)$ and  $\bbU$ represent a basis for the column space of $\bbY$, there exists a non-singular matrix $\bbS \in \reals^{K \times K}$ such that 
$\bbV=\bbU\bbS$. The subspace fitting problem reads then as:
\begin{align}
\label{eq:subspace-fitting}
    \min_{\bbx, \bbS}\;\frac{1}{2} \|\bbV(\bbx) - \bbU\bbS\|_F^2,
\end{align}
which, upon substituting the pseudoinverse solution  $\hat{\bbS}=\bbU^\dagger\bbV$ into~\eqref{eq:subspace-fitting}, can be casted as the following equivalent problem:
\begin{align}
\label{eq:equivalent-subspace-fitting}
    \min_{\bbx}\; \{ f(\bbx):=\frac{1}{2} \|\bbP \bbV(\bbx)\|_F^2\},
\end{align}
with $\bbP:=\bbI_N - \bbU\bbU^\dagger$ the orthogonal projection matrix onto the orthogonal complement  of $\bbU$.
In other words, problem~\eqref{eq:equivalent-subspace-fitting} aims to find a vector $\bbx$ such that the Vandermonde matrix $\bbV(\bbx)$ is orthogonal to the subspace spanned by the orthogonal complement of $\bbU$.

\sn
Problem~\eqref{eq:subspace-fitting} is not convex in $\bbx$ due to the polynomial degree $K$ (unless $K-1=1$, i.e., the model in~\eqref{eq:polynomial-regression} is linear). To tackle the non-convexity of the problem, we resort to sequential convex programming (SCP)~\cite{boyd2008sequential}, a local optimization method that leverages convex optimization, where the non-convex portion of the problem is modeled by convex functions that are (at least locally) accurate. As in any non-convex problem, the initial starting point plays a big role; thus, if no prior information on the variable is given, a multi-starting point approach is advisable.

\begin{remark}
\label{remark:music}
As mentioned in Remark~\ref{remark:doa}, the nature of the problem and the  formulation~\eqref{eq:equivalent-subspace-fitting} share similarities with the MUSIC algorithm~\cite{schmidt1986multiple}. However, while in MUSIC every column of matrix $\bbV(\bbx)$ (containing the steering vectors of the array manifold)  depends solely on one scalar parameter and $N$ independent $1$-dimensional searches can be carried out, here each column contains all the $N$ variables. An $N$-dimensional search is thus needed, which makes a ``scanning'' of the vector variable $\bbx$ infeasible, unless $N$ is very small.
\end{remark}

\sn
\textbf{SCP.} The general idea of SCP is to maintain, at each iteration $r$, an estimate $\bbx^{r}$ and a respective convex trust region $\ccalT^{r} \subseteq \reals^{N}$ over which we trust our solution to reside. The next solution $\bbx^{r+1}$ is then obtained by minimizing, over the defined trust region $\ccalT^{r}$, a convex approximation $\hat{f}(\cdot)$ of $f(\cdot)$ around the previous estimate $\bbx^{r}$. In our case, we define as trust region the set:
\begin{align}
\label{eq:trust-region}
    \ccalT^{r}:=\{\bbx \;  \vert \;  \|  \bbx - \bbx^{r}\|_p^2 \leq \rho(r) \}
\end{align}
where $\rho: \mathbb{Z}_{+} \to \reals_{++}$ is an iteration-dependent mapping indicating the maximum admissible length of the convex $p$-norm ball  in~\eqref{eq:trust-region}.

Next, a feasible intermediate iterate is found by:
\begin{align}
    \hat{\bbx} = \argmin_{\bbx \in \ccalT^{[r]}} f(\bbx^{r}) + \nabla_{\bbx} f(\bbx^{r})^\top (\bbx - \bbx^{r})
\end{align}
i.e., by minimizing the (convex) first-order Taylor approximation of $f(\cdot)$ around $\bbx^{r}$, with
 $\nabla_{\bbx} f(\cdot) \in \reals^N$  the gradient of function $f(\cdot)$; see Appendix~A for the explicit gradient computation.
Notice that due to the non-convexity of $f(\cdot)$ [cf.~\eqref{eq:equivalent-subspace-fitting}], its value $f( \hat{\bbx})$ at the new feasible estimate $ \hat{\bbx}$ is not guaranteed to be lower than the value at $\bbx^{r}$. Thus, to find the optimal solution at iteration $(r+1)$, we pick the best convex combination of $\bbx^{r}$ and $\hat{\bbx}$ achieving the lowest function value, i.e.,:
\begin{align}
\alpha_r^\star &= \argmin_{\alpha_r} f (\alpha_r \bbx^{r} + (1-\alpha_r)\hat{\bbx}), \quad \alpha \in (0,1) \\
    \bbx^{r+1}&=\alpha_r^\star \bbx^{r} + (1-\alpha_r^\star)\hat{\bbx}.
\end{align}
We stress that a zero fitting error of the cost function can be achieved only by the true unknown parameter and its model ambiguities, which as such represent global minima of the function and solutions of the problem. %Local minima are not ambiguities of the problem. In the next section, we perform numerical simulations to assess the validity of our approach, with an application to jitter correction.

 \begin{figure*}[]
 %\label{fig:inferred}
 \vspace{-1.4cm}
 \hspace{-2.5mm}
\begin{subfigure}{0.33\textwidth}
    \centering
    \includegraphics[scale=0.45, trim= 0.5cm 0 1cm 0.5cm, clip=true ]{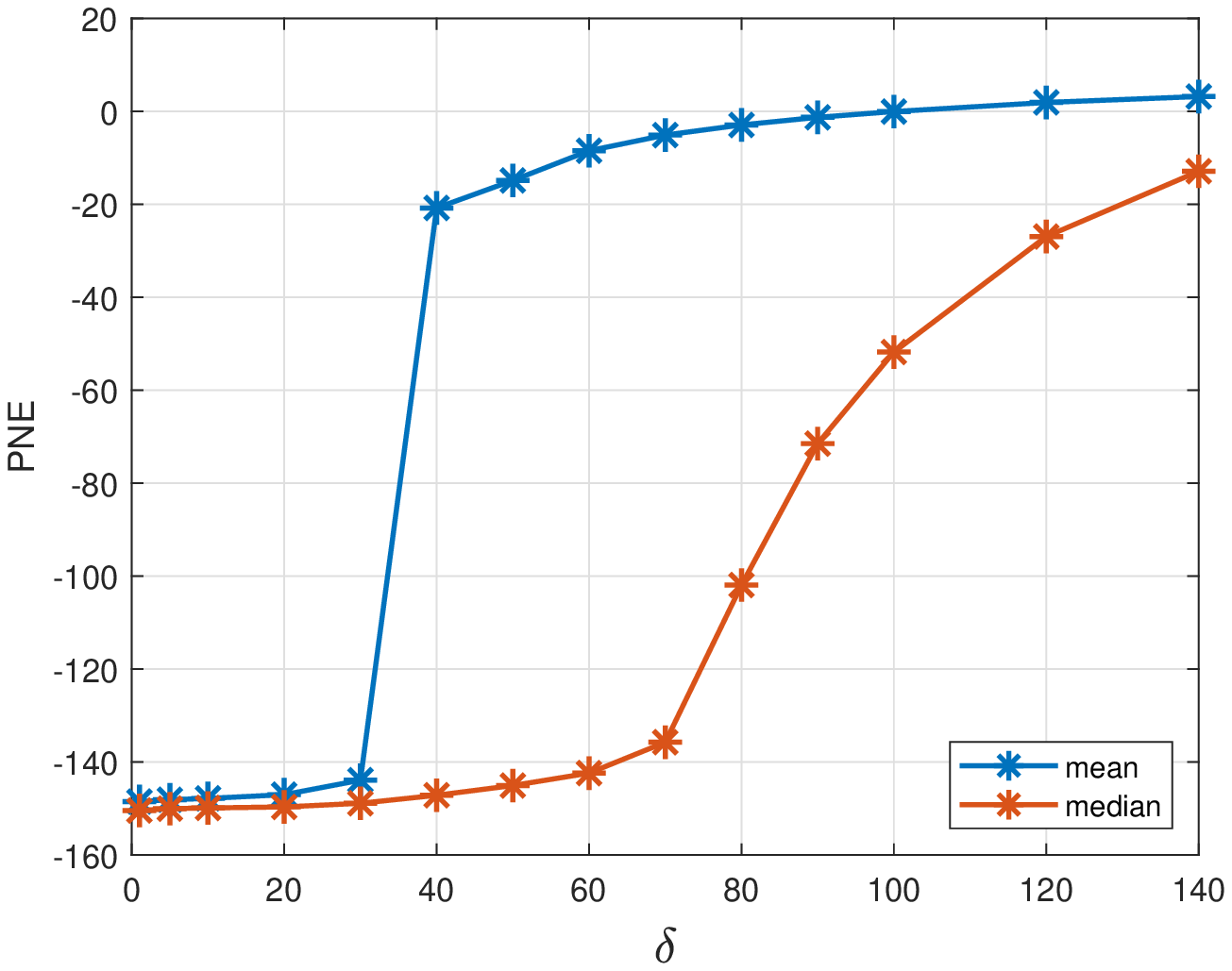}
    %\caption{}
    \label{fig:locations}
\end{subfigure}\hspace{3mm}%
\begin{subfigure}{0.33\textwidth}
    \centering
   \includegraphics[scale=0.45, trim= 0.7cm 0 1cm 0.5cm, clip=true ]{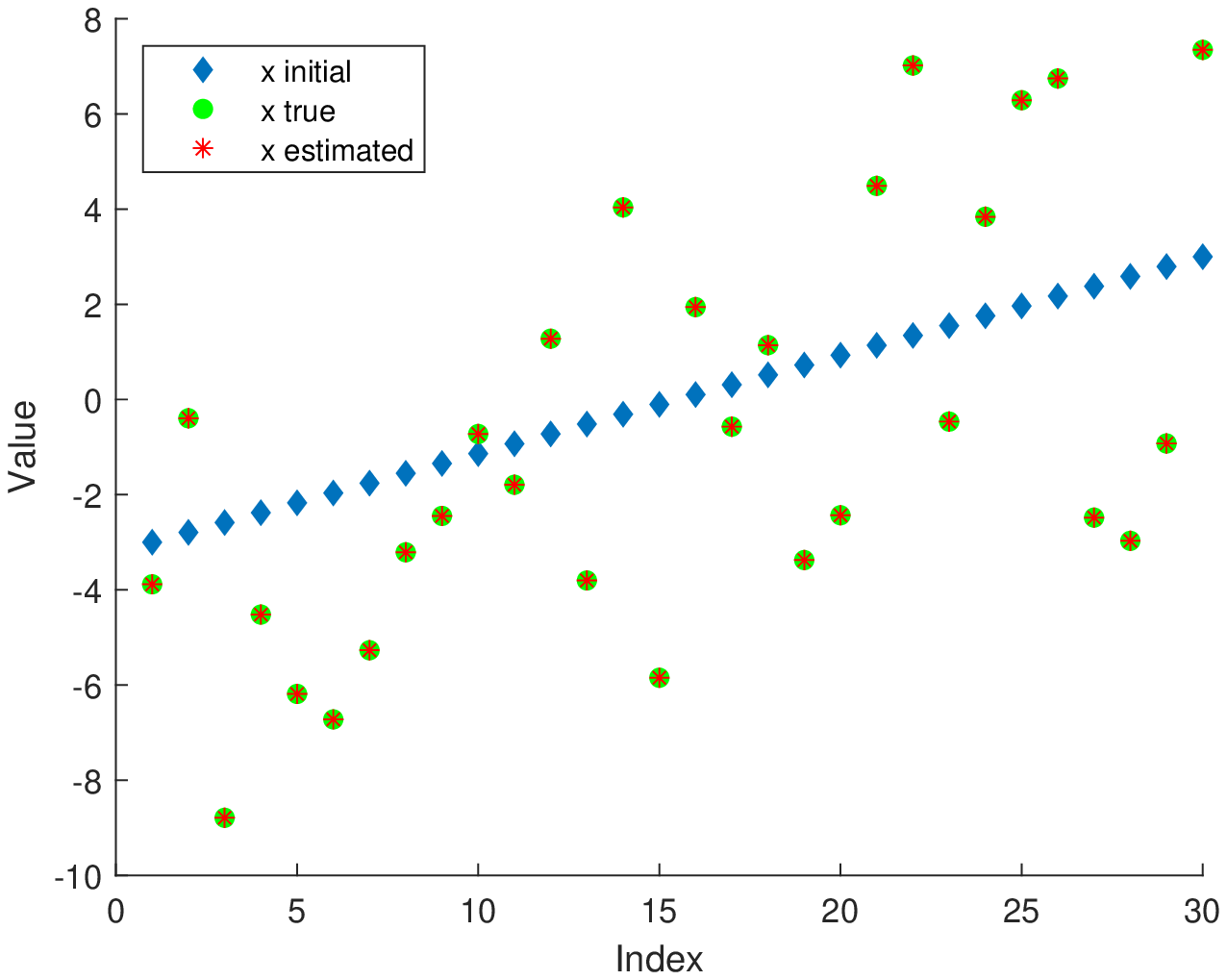}
    %\caption{}
    \label{fig:locations}
\end{subfigure}\hspace{2mm}%
\begin{subfigure}{0.33\textwidth}
\centering
\includegraphics[scale=0.45, trim= 0.7cm 0 1cm 0.5cm, clip=true]{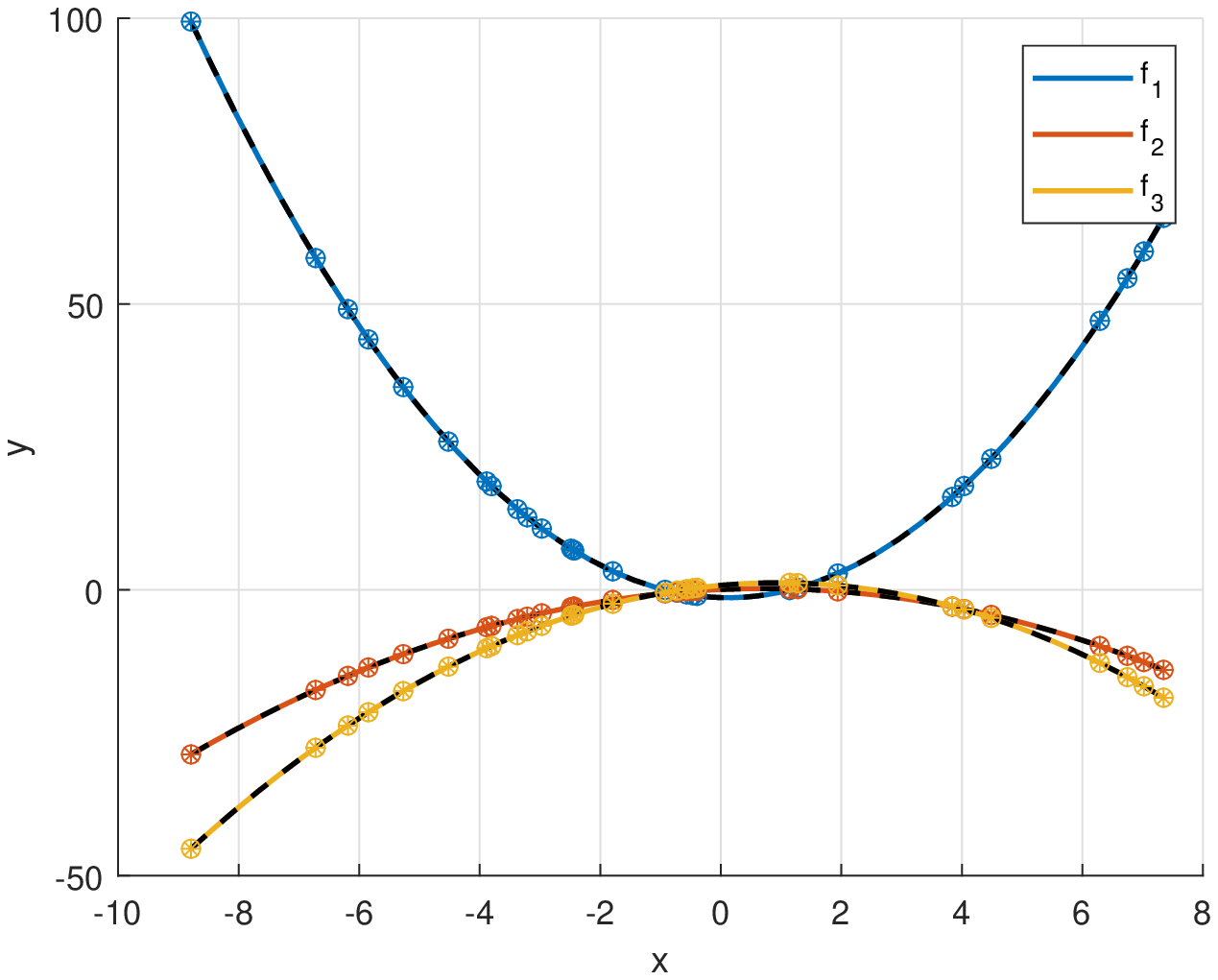}
%\caption{}%\vspace{.7cm}
\label{fig:polynomials}%
\end{subfigure}
\vspace{-0.6cm}
\caption{For $K=L=3$:  (left) Average and median PNE in dB over 1000 runs of the algorithm; (middle) True locations (green circles), inferred locations (red asterisks) and uniform locations (blue diamonds); the latter is used as the starting point for the algorithm; (right) Original polynomials (solid lines) with associated function samples ($\circ$) in correspondence of the original uniform locations, and \textit{inferred} polynomials (black dotted lines) with associated function samples ($*$) in correspondence of the \textit{inferred} locations. }
\label{fig:inferred}
\end{figure*}

%%%%%%%%%%%%%%%%%%%%%%%%%%%%%%%%%%%%%%%%
\vspace{-.2cm}
\section{Numerical Results}
\label{sec:numerical}
 
% Please add the following required packages to your document preamble:
% \usepackage{multirow}
% Please add the following required packages to your document preamble:
% \usepackage{multirow}
%\begin{table*}[]
%% \centering
%        \caption{Average NE and PNE for different values of $\delta, K $ and $L$ over $100$ simulations with $N=30$.}
%        \label{table:one}
%\begin{tabular}{|cc|cc|cc|cc|cc|cc|}
%\hline
%\multicolumn{2}{|c|}{\multirow{2}{*}{$N=30$}} & \multicolumn{2}{c|}{$\delta=1$} & \multicolumn{2}{c|}{$\delta=2$} & \multicolumn{2}{c|}{$\delta=3$} & \multicolumn{2}{c|}{$\delta=4$} & \multicolumn{2}{c|}{$\delta=5$} \\
%\multicolumn{2}{|c|}{} & NE & PNE & NE & PNE & NE & PNE & NE & PNE & NSE & PNE \\ \hline
%\multirow{2}{*}{$K=3$} & $L=3$ & $1.84\times 10^{-3}$ & 0 & $3.67\times 10^{-3}$ & 0 & $6.79\times 10^{-3}$ & 0 & $7.22\times 10^{-3}$ & 0 & $9.19\times 10^{-3}$ & 0 \\
 %& $L=4$ & $1.74\times 10^{-3}$ & 0 & $3.56\times 10^{-3}$ & 0 & $5.47\times 10^{-3}$ & 0 & $7.48\times 10^{-3}$ & 0 & $8.61\times 10^{-3}$ & 0 \\ \hline
%$K=4$ & $L=4$ & $1.86\times 10^{-3}$ & 0 & $4.31\times 10^{-3}$ & 0 & $5.36\times 10^{-3}$ & 0 & $7.00\times 10^{-3}$ & 0 & $1.01\times 10^{-2}$ & 0 \\ \hline
%\end{tabular}
%\end{table*}

\vspace{-.3cm}

We now perform numerical simulations to assess the validity of the subspace approach, with focus on a practical case-study: clock-synchronization after sampling jitter. 

\sn 
\textbf{Scenario.} In the context of signal sampling, jitter is an undesirable effect caused by, for instance, interference of the clock signal, and consisting of a deviation from the true sampling pattern which we consider here to be periodic. When the time between samples varies an instantaneous signal error arises and  a jitter-correction (clock-recovery) is desirable.
To this extent, consider the time-interval  $[u_0, u_1, \ldots, u_{N-1}]$ of uniformly sampled  locations (timings) with period $T$, i.e., such that $u_n - u_{n-1}= T$ for $n=1, \ldots, N-1$. Because of jitter, the actual sampling locations are given by $x_n= u_n + j_n$, with $j_n \sim \ccalN(0, \delta T/2)$ a Gaussian distribution truncated at one standard deviation. The   parameter $\delta$ specifies whether an overlap among adjacent samples $x_n$ is possible ($\delta \geq 1$) or not ($\delta < 1$).
 Associated to each $x_n$ is a signal value $y_n$ corresponding to the sampling of an unknown function $f(x_n)$.%, possibly affected by an error $e_n$, i.e., $y_n= f(x_n) + e_n$, with $e_n \sim \ccalN(0, \sigma^2)$. 
 
 \sn 
 \textbf{Metrics.}
To assess the validity of the proposed approach, recall that  we can recover the solution of problem~\eqref{eq:equivalent-subspace-fitting} up to a shift and scaling of the true positions [cf. Section~\ref{sec:ambiguities}]. Thus, we use  the normalized error modulo Pascal (PNE) as performance metric, defined as:
%However, since  we can recover the solution of Problem~\eqref{eq:equivalent-subspace-fitting} up to a shift and scaling of the true positions [ cf. Section~\ref{sec:ambiguities}], a more appropriate metric to assess the goodness of the algorithm is through the notion of  which measures how far the true locations are from a linear transformation of the recovered estimates, i.e.,:
%
\begin{align}
\label{eq:pascal-error}
    \text{PNE}(\hat{\bbx}, \bbx)= \min_{t_0, t_1} \frac{\| \bbx - (t_0 \boldsymbol{1} + t_1 \hat{\bbx})\|_2}{NT}
\end{align}
  which measures how far the true locations are from a linear transformation of the recovered estimates.
 Clearly~\eqref{eq:pascal-error} is zero whenever $\hat{\bbx}$ is a solution for~\eqref{eq:equivalent-subspace-fitting}.

 \sn
 \textbf{Results.}
For a fixed number $N=30$ of locations, we run  the algorithm $1000$ times to randomize the coefficient matrix $\bbW$ used to generate the polynomials, for different values of the polynomial order $K=\{3,4\}$, number of polynomials $L=\{3,4\}$, $L\geq K$, and parameter $\delta \in (1, 140)$. We focus our attention on quadratic ($K=3$) and cubic ($K=4$) polynomials because for higher degrees the polynomials are almost flat in the domain of interest and they tend to be highly sensitive to round-off due to the very ill-conditioning of the Vandermonde matrix. As function domain we consider $[-3,3]$, and discretize it with a sampling period of $T= 6/(N-1)$. We initialize the starting point of the algorithm with the uniform sampled domain, i.e., $\bbx^0=[u_0, \ldots, u_{N-1}]^\top$.

\sn In Fig.~\ref{fig:inferred} (left) is shown (in dB) the average and median PNE over $1000$ simulations for $K=L=3$ and $\delta$ ranging from $1$ to $140$. The first remarkable result is that the algorithm is able to perfectly recover the solution (up to a shift and scaling) without ending in a local minimum for very high values of $\delta$, resulting in big overlaps among the samples. In particular, both the mean and the median are zero up to $\delta=30$, which is the setting for which one sample can overlap with $15$ of its nearby samples. The median, which is more robust than the mean in terms of sensitivity to outliers, is basically zero up to $\delta=60$, i.e., the setting for which a sample can overlap with all the others, rendering it almost a random configuration of points. Nonetheless, also for the extreme case of $\delta=140$, the median still achieves an error of $1/100$th of a sampling period, which is an acceptable error.
To enjoy the algorithm's performance, in Fig.~\ref{fig:inferred} (middle) we show for $K=L=3$ and $\delta=60$,  the estimated locations (red, after ambiguity correction), together with the true locations (green) and the initial starting point of the algorithm (blue). Correspondingly, in  Fig.~\ref{fig:inferred} (right) we show the original and inferred polynomials, constructed as $\hat{\bbY}= \bbV(\hat{\bbx})\bbV(\hat{\bbx})^\dagger \bbY$.

%For all the settings, the average PNE\footnote{We treat numbers below machine precision as zero.} of the $100$ simulations is $0$; i.e., the algorithm is able to perfectly recover the solution (up to a shift and scaling) without ending in a local minimum, even when nearby samples overlap due to the jitter. This is quite remarkable, since the problem is non convex and adjacent samples overlap. This is especially visible in an extreme jitter scenario (obtained with  $\delta=20$),  shown in~Fig.\ref{fig:inferred} (middle) for $K=L=4$. We can observe how the estimated locations (red) coincide with the true locations (green), even though the relative ordering of the locations has changed due to the jittering (see, e.g., the second and the third locations). Correspondingly, in  Fig.~\ref{fig:inferred} (right) we show the original and inferred polynomials, constructed as $\hat{\bbY}= \bbV(\hat{\bbx})\bbV(\hat{\bbx})^\dagger \bbY$.

%As expected because of the ambiguity, the NSE is not exactly zero, but it is three orders of magnitude smaller than the sampling period and as such very satisfactory. Clearly, it increases with increasing values of the truncation parameter $\delta$ but still achieving in a satisfactory error.

%%%%%%%%%%%%%%%%%%%%%%%%
\vspace{-.3cm}
\section{Conclusion}
\vspace{-.2cm}
In this work, we studied the problem of polynomial regression when the coefficients and the regressor are both unknown. We have shown how the model is identifiable up to a generalized Pascal transformation, responsible to scale and shift the input parameter, which is sufficient in many recovery problems. We then proposed two algorithmic routines to tackle the problems, namely an exhaustive search approach and a subspace fitting approach. Finally, we performed numerical simulations in the context of sampling jitter and show the effectiveness of the subspace fitting approach to recover the original sample locations. Interesting future research directions include: \textit{i)} studying the influence of input perturbation ($\delta$) and polynomial order ($K$) on the reconstruction error for jitter scenarios; \textit{ii)} understanding  the recoverability property in case of noise in the output; \textit{iii)} studying the problem under a purely (structured) matrix factorization approach with a Vandermonde factor.
\vspace{-.25cm}

\appendix
\section{Gradient Computation}
\vspace{-.2cm}

Consider the function $f(\bbx)= \|\bbP\bbV(\bbx)\|_F^2$ and define the composite matrix function  $\bbM(\bbV(\bbx))= \bbP\bbV(\bbx)$. We use the chain rule and to compute the gradient of $f(\cdot)$ w.r.t. $\bbx$, i.e., $\partial f (\cdot)/ \partial \bbx $.
We have $ f(\bbM)= \frac{1}{2} \tr(\bbM^\top \bbM)= \frac{1}{2} \bbM : \bbM$. The differential of $f(\cdot)$ is $df(\bbM, d\bbM)= \bbM : d\bbM= \bbM : \bbP d\bbV(\bbx)$. For the differential $d\bbV(\bbx, d\bbx)$ notice that, by selecting one column, e.g., $\bbx^3$, we have $(\bbx + d\bbx)^3= \bbx^3 + 3 \bbx^2 d\bbx + \ccalO(d\bbx^2)$, and more generally:
\begin{align}
 d\bbV (\bbx) &= [\boldsymbol{0} \; \boldsymbol{1} \ldots (K-1)\bbx^{K-2}]  \odot [d\bbx \;  d\bbx \ldots d\bbx]\\
&= (\bbV(\bbx) \bbD_k) \odot (d\bbx \boldsymbol{1}^\top)= \Diag(d\bbx)(\bbV(\bbx) \bbD_k)
\end{align}
where $\bbD_k= \operatorname{SupDiag}(1, \ldots, K-1) \in \reals^{K \times K}$ is the differentiation matrix for polynomials and $\odot$ is the element-wise product.
%\begin{align}
%    \bbD_k= \begin{bmatrix}
 %        0 & 1 & 0 & 0 & \cdots & 0 \\
  %       0 & 0 & 2 & 0 & \cdots & 0 \\
   %      0 & 0 & 0 & 3 & \cdots & 0 \\
    %     \vdots & 0 & 0 & 0 & \cdots & \vdots\\
     %     0 & 0 & 0 & 0 & \cdots & K-1 \\
      %    0 & 0 & 0 & 0 & \cdots & 0
    %\end{bmatrix} \in \reals^{K \times K}
%\end{align}
%
By plugging this differential into the previous expression for $df$ we obtain:
\begin{align*}
    df(\bbM,d\bbM)= \bbM :  \bbP d\bbV(\bbx) =\bbM :  \bbP  \Diag(d\bbx)(\bbV(\bbx) \bbD_k)  \\
     =\bbP^\top \bbM (\bbV(\bbx) \bbD_k)^\top :    \Diag(d\bbx)  
     =\diag(\bbP^\top \bbM (\bbV(\bbx) \bbD_k)^\top) :    d\bbx 
\end{align*}
We conclude $ \partial f(\bbx)/ \partial \bbx= \diag(\bbP^\top \bbP\bbV(\bbx)\bbD_k^\top \bbV(\bbx)^\top) $.

\newpage

% References should be produced using the bibtex program from suitable
% BiBTeX files (here: strings, refs, manuals). The IEEEbib.bst bibliography
% style file from IEEE produces unsorted bibliography list.
% -------------------------------------------------------------------------
\bibliographystyle{IEEEbib}
\bibliography{refs}

\end{document}